\documentclass[11pt]{article}
\usepackage[a4paper, margin=1in]{geometry}
\usepackage{amsmath}
\usepackage{amsfonts}
\usepackage{amssymb}
\usepackage{amsthm}
\usepackage{mathrsfs}
\usepackage{mathtools}
\usepackage{graphicx}
\usepackage{microtype}
\usepackage{dsfont} 
\usepackage[utf8]{inputenc}
\usepackage[T1]{fontenc} % Nicer looking \texttt underscores
\usepackage[ruled,vlined,linesnumbered]{algorithm2e}

% Commands
% Apparently asterisk makes sub/superscripts appear below/above the symbol.
\DeclareMathOperator*{\E}{\mathbb{E}}

\DeclareMathOperator{\dist}{dist}

\newcommand{\Q}{\mathcal{Q}}
\newcommand{\LSH}{\mathcal{H}}

% Theorems
\newtheorem{theorem}{Theorem}
\newtheorem{lemma}[theorem]{Lemma}

\theoremstyle{definition}
\newtheorem{definition}[theorem]{Definition}

\theoremstyle{remark}

\begin{document}

\title{Confirmation Sampling for Exact Nearest Neighbor Search}

\author{
	Tobias Christiani\footnote{The research leading to these results has received funding from the European Research Council under the European Union’s 7th Framework Programme (FP7/2007-2013) / ERC grant agreement no.~614331.}\\
	\small \texttt{tobc@itu.dk}\\
	\small IT University of Copenhagen and BARC
	\and
	Rasmus Pagh$^*$\footnote{Supported by Villum Foundation grant~16582 to Basic Algorithms Research Copenhagen (BARC). Part of this work was done while visiting Simons Institute for the Theory of Computing.} \\
	\small \texttt{pagh@itu.dk}\\
	\small IT University of Copenhagen and BARC
	\and
	Mikkel Thorup\footnote{Supported by an Investigator Grant from the Villum Foundation, Grant No.~16582.} \\
	\small \texttt{mikkel2thorup@gmail.com}\\
	\small University of Copenhagen and BARC
}
\maketitle

\begin{abstract}
Locality-sensitive hashing (LSH), introduced by Indyk and Motwani in STOC '98, has been an extremely influential framework for nearest neighbor search in high-dimensional data sets.
While theoretical work has focused on the \emph{approximate} nearest neighbor problems, in practice LSH data structures with suitably chosen parameters are used to solve the \emph{exact} nearest neighbor problem (with some error probability).
Sublinear query time is often possible in practice even for exact nearest neighbor search, intuitively because the nearest neighbor tends to be significantly closer than other data points.
However, theory offers little advice on how to choose LSH parameters outside of pre-specified worst-case settings.

We introduce the technique of \emph{confirmation sampling} for solving the exact nearest neighbor problem using LSH.
First, we give a general reduction that transforms a sequence of data structures that each find the nearest neighbor with a small, unknown probability, into a data structure that returns the nearest neighbor with probability $1-\delta$, using as few queries as possible.
Second, we present a new query algorithm for the \emph{LSH Forest} data structure with~$L$ trees that is able to return the exact nearest neighbor of a query point within the same time bound as an LSH Forest of $\Omega(L)$ trees with internal parameters specifically tuned to the query and data.
\end{abstract}

\section{Introduction}

Locality-sensitive hashing~\cite{indyk1998} (LSH) is the leading theoretical approach to nearest neighbor problems in high dimensions.
In nearest neighbor search we seek to preprocess a point set $P$ such that given a query point $q$, we can quickly return the point in $P$ that is closest to $q$ according to some distance measure $\dist(\cdot,\cdot)$.
Theoretical results are typically formulated as \emph{approximation} algorithms that allow a point at distance $cr$ to be returned if the nearest neighbor has distance~$r$ from the query point, where $c>1$ is a user-specified approximation factor.
In practice the quality parameter of interest is the \emph{recall}, i.e., the empirical probability of retrieving the nearest neighbor (see e.g.~\cite{andoni2015practical}).
As we will see below is not hard to show that LSH methods can obtain recall arbitrarily close to 1 if parameters are suitably chosen according to the given query and data set.
However, choosing parameters well, in an efficient way, is a challenge~\cite{lv2017intelligent}.

\subsection{Background}

\paragraph{Locality-sensitive hashing.}
A locality-sensitive family of hash functions $\LSH$ (an ``LSH family'') has the property that hash collision probability decreases as distance increases.
Specifically, for $h\sim\LSH$ the ``hash bucket'' $S_h(q) = \{ x\in P \mid h(x)=h(q) \}$ is more likely to contain the nearest neighbor of $q$ than any other element of $P$.
For a given data set $P$ one would typically use a family $\LSH$ such that the expected size of $S_h(q)$ is constant (for every $q$ or on average for a certain query distribution)~\cite{slaney2012optimal}.
Given such a family $\LSH$, suppose the nearest neighbor is $x_1\in P$, and define $p_1 = \Pr[x_1 \in S_h(q)]$ to be the probability of a hash collision with the nearest neighbor.
Then inspecting $S_{h_i}(q)$ for a sequence of hash functions $h_1,\dots,h_L$ independently sampled from $\LSH$ we will fail to find $x_1$ with probability $(1-p_1)^L \approx \exp(- p_1 L)$.
To make this as efficient as possible we can use a hash table that given $q$ allows us to retrieve $S_{h_i}(q)$ in time $O(1+|S_{h_i}(q)|).$
If we assume that the distance between $q$ and $x\in P$ can be computed in constant time, the expected time for this procedure is $O(L \E[1+|S_h(q)|])$.
There are several issues with the above construction:
\begin{itemize}
    \item If $p_1$ is large then the query algorithm still goes through $L$ hash buckets, even though we expect to see $x_1$ within the first $O(1/p_1)$ buckets.
    \item If $p_1 L$ is small, the recall $1-\exp(- p_1 L)$ is close to zero.
\end{itemize}
Notice that $p_1$ depends on the nearest neighbor that we are searching for, resulting in a chicken-and-egg situation: we would like to conduct the search with knowledge of $p_1$, but we only know $p_1$ if the search finds $x_1$ (and we know how the collision probability depends on $\dist(q, x_1)$).
We will introduce a technique called \emph{confirmation sampling} for dealing with the former problem of when to terminate the search when we have no knowledge of $p_1$.
The latter problem requires us to take a new look at how to query the so-called LSH forest data structure, described below.

\paragraph{Approximation versus recall.}
Early theoretical work on high-dimensional nearest neighbor search dealt with the simpler case of \emph{near neighbor} search where it is assumed that a maximum distance~$r$ to the nearest neighbor is known and a point within distance $cr$ must be returned.
A reduction with logarithmic overhead in time and space extends this to solve the approximate nearest neighbor problem with unknown distance $r$~\cite{indyk1998,har-peled2012}.
These reductions increase the approximation factor by $1+\gamma$, with space usage proportional to $1/\gamma$, and do not seem to provide any guarantee on recall even if used with a near-neighbor data structure with approximation factor~$c=1$.

A data structure known as \emph{LSH forest}, first described by Charikar~\cite{charikar2002} and later generalized and baptized by Bawa et al.~\cite{bawa2005lshforest}, removes the logarithmic overhead in space but the query algorithm still only provides $c$-approximate results and does not guarantee a specific recall.
Indeed, it is not hard to construct examples where there are many $c$-approximate nearest neighbors and the probability of returning the exact nearest neighbor is negligible.

\paragraph{LSH Forest.}
Since we will describe a new query algorithm for the LSH Forest data structure we review the data structure here.
We will again make use of an LSH family $\LSH$, but this family can be ``weak'' in the sense that collision probabilities are large, say, $Pr[h(q)=h(x)]=\Omega(1)$ for $x\in P$.
Assume for simplicity that we can sample $h\sim\LSH$ and evaluate $h(x)$ in constant time.
For parameters $K$ and $L$ and $(i,j)\in \{1,\dots,K\}\times\{1,\dots,L\}$, independently sample hash functions $h_{i,j} \sim \LSH$. Associate each point $x\in P$ with a string $h_j(x) = h_{1,j}(x) h_{2,j}(x) \dots h_{K,j}(x)$.
For $j=1,\dots,L$ the $j$th part of the LSH Forest is a trie that stores prefixes of the set of strings $h_j(P) = \{ h_j(x) \mid x\in P \}$.
Specifically, for each $x\in P$ it stores the shortest prefix of $h_j(x)$ that is unique among strings in $h_j(P)$ (if such a prefix exists, otherwise the whole string $h_j(x)$).
A pointer to $x$ is placed in the leaf corresponding to a prefix of $h_j(x)$.
The space for the data structure, not counting space for storing the $n$ points in $P$, is $O(nKL)$ words na\"ively, and can be improved to $O(nL)$ words using path compression~\cite{bawa2005lshforest}.

\paragraph{Querying LSH Forest.}
On a query $q$ and for a parameter $i\in\{1,\dots,K\}$, LSH Forest allows us to retrieve the hash bucket $S_{i,j}(q)$ of points in $P$ matching a length-$i$ prefix of $h_j(q)$ in time $O(i+|S_{i,j}(q)|)$.
We will use $p(q,x) = \Pr_{h\sim\LSH}[h(q)=h(x)]$ as shorthand for the collision probability between $q$ and $x$.
We have
\begin{equation}\label{eq:collisions}
    \E[|S_{i,j}(q)|] = \sum_{x\in P} p(q,x)^i \enspace .
\end{equation}
The larger the ``level'' $i$ is the smaller $S_{i,j}(q)$ is in expectation.
Conversely the probability of finding $x_1$ in the hash bucket is $\Pr[x_1 \in S_{i,j}(q)] = p(q,x_1)^i$ which decreases exponentially with~$i$.
The query algorithm described in~\cite{bawa2005lshforest} chooses the level $i_0$ to inspect as the smallest level where the number of collisions is linear, $i_0 = \min \{ i \mid \sum_{j=1}^L |S_{i_0,j}(q)| \leq cL\}$, for some constant $c$.
The probability of failing to find the nearest neighbor by inspecting all buckets $S_{i,1},\dots,S_{i,L}$ at level $i$ is $(1-p_1^i)^L \approx \exp(- p_1^i L)$, so to bound the failure probability we need to choose $L$ large enough.
For example, if the nearest neighbor of $q$ is in a dense cluster of $2cL$ points whose points almost surely reside in the same LSH bucket, the algorithm fails to find the nearest neighbor almost surely.
So LSH Forest is only ``self-tuning'' to a limited extent if high recall is desired: choosing a suitable parameter $L$ requires at least approximate knowledge of the distance distribution from $q$ to points of $P$.
Instead, we would like $L$ to be simply a parameter that determines the space usage, and use a different query algorithm that adapts to the data automatically.

%-----------------------------------------------------------------------------
\subsection{Our results}
%-----------------------------------------------------------------------------

LSH methods work by performing many iterations, each inspecting a hash table $\mathcal{D}_i$ with a small (and unknown) probability $p_1$ of finding the nearest neighbor.
It is easy to see that after $\ln(1/\delta)/p_1$ iterations the nearest neighbor will be retrieved with probability at least $1-\delta$.
We show that this number of iterations can be matched in expectation without knowledge of $p_1$, and in fact even without estimating any collision probabilities. 
Using a technique we call \emph{confirmation sampling} we obtain the following result on LSH-like methods:
\begin{theorem}\label{thm:confirmationLSH}
    Suppose there is a sequence of independent, randomized data structures $\mathcal{D}_1,\mathcal{D}_2,\dots$, such that on query $q$, $\mathcal{D}_i$ returns the nearest neighbor of $q$ in $P$ with probability \emph{at least} $p_q$ and each other point in $P$ with probability \emph{at most} $p_q$.
    Let $\delta > 0$ be given.
    There is an algorithm that depends on $\delta$ but not on $p_q$ that on input $q$ queries
    data structures $\mathcal{D}_1,\dots,\mathcal{D}_{j_q}$, performs $j_q$ distance computations, where $\E[j_q] = O(\ln(1/\delta)/p_1)$, and returns the nearest neighbor of $q$ with probability at least $1-\delta$.
\end{theorem}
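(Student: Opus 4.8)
The plan is to analyze the following \emph{confirmation sampling} rule. Maintain a current candidate $c$, namely the point closest to $q$ among all points returned so far, together with a counter $k$ recording how many times $c$ has been returned since it last became the candidate. Upon querying $\mathcal{D}_i$ and receiving a point $x$: if $x$ is strictly closer to $q$ than $c$, replace $c$ by $x$ and reset $k\leftarrow 1$; if $x=c$, set $k\leftarrow k+1$; otherwise do nothing. Halt as soon as $k$ reaches a threshold $\tau=\lceil C\ln(1/\delta)\rceil$ for a suitable constant $C$, and output $c$. Each query triggers one distance computation, so $j_q$ equals the number of queries. Since $x_1$ is the globally closest point, once $x_1$ is returned it can never be displaced as the candidate, and the output equals $x_1$ precisely when $x_1$ has been returned at least once before we halt. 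Thus the failure event is exactly ``the algorithm halts before $x_1$ is first returned,'' and the whole proof reduces to controlling this event and the halting time.

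For correctness I would order the points of $P$ as $x_1=z_1,z_2,\dots$ by increasing distance to $q$ and write $p_i=\Pr[\mathcal{D}\text{ returns }z_i]$, so the hypotheses read $p_i\le p_q\le p_1$ for every $i\ge 2$. The algorithm can output some $z_i$ with $i\ge 2$ only if $z_i$ becomes the candidate and is confirmed $\tau$ times while none of the strictly closer points $z_1,\dots,z_{i-1}$ is ever returned (any such return would either prevent $z_i$ from becoming the candidate or reset the counter away from it). Restricting attention to the queries returning one of $z_1,\dots,z_i$, this forces the first $\tau$ of them to all return $z_i$, an event of probability $(p_i/S_i)^{\tau}$ where $S_i=\sum_{l\le i}p_l$. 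A union bound then gives failure probability at most
\[
    \sum_{i\ge 2}\left(\frac{p_i}{S_i}\right)^{\tau},
\]
and the entire difficulty concentrates in showing this sum is $O(2^{-\tau})$, so that $\tau=\Theta(\ln(1/\delta))$ drives it below $\delta$.

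The main obstacle is exactly this summation, because the crude per-term estimate $p_i/S_i\le 1/2$ yields $2^{-\tau}$ per term but a spurious factor of $n$ — or, after a slightly better telescoping argument bounding $\sum_i p_i/S_i$ by $\ln(1/p_1)$, a spurious factor of $\ln(1/p_1)$ — both of which would inflate $\tau$ and hence the query count. The observation that rescues the bound is the constraint $p_i\le p_q\le p_1=S_1$, which caps how many terms can be large. Concretely I would bucket the competitors by the dyadic range of their ratio, putting $z_i$ in bucket $j$ when $p_i/S_i\in(2^{-(j+1)},2^{-j}]$ (note $j\ge 1$ since every ratio is at most $1/2$). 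A bucket-$j$ step multiplies the running total $S$ by a factor $\approx 1+2^{-j}$ and, crucially, requires $S_{i-1}\le 2^{j}p_1$ because $p_i\le p_1$; hence bucket-$j$ steps are confined to a range of multiplicative width $2^{j}$ and number only $O(j\,2^{j})$. Their total contribution is therefore $O(j\,2^{j}\cdot 2^{-j\tau})$, and $\sum_{j\ge 1} j\,2^{-j(\tau-1)}=O(2^{-\tau})$. This is precisely where confirmation sampling beats a fixed query level: resetting the counter on every improvement makes a far competitor $z_i$ progressively harder to confirm, since it must simultaneously dodge \emph{all} closer points, and the cap $p_i\le p_q$ prevents the partial sums $S_i$ from ever growing fast enough to make many competitors cheap to confirm.

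Finally, for the running time I would use that $x_1$ is never displaced once returned, so its counter equals its total number of appearances and first reaches $\tau$ at the $\tau$-th time $\mathcal{D}$ returns $x_1$. The actual halting time is at most this (it can only be cut short by a competitor reaching $\tau$ first, which is the already-bounded failure case), so $j_q$ is dominated by a negative-binomial waiting time and $\E[j_q]\le \tau/p_1=O(\ln(1/\delta)/p_1)$; queries returning no relevant point merely lengthen the wait and are already folded into the per-query rate $p_1$. Combining the three estimates gives the claimed guarantee, with the dyadic counting argument of the previous paragraph being the one genuinely delicate step.
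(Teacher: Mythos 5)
Your algorithm is the same as the paper's \textsc{ConfirmationSampling}, but your analysis of the failure probability takes a genuinely different route. The paper (Theorem~\ref{thm:cs}) never decomposes the failure event according to which competitor wins: it observes that a false confirmation must hit the \emph{current} candidate $\beta\neq x_1$, a single fixed point that any one query returns with probability at most $p_q$, while $x_1$ is returned with probability at least $p_q$; restricting to the decisive samples, failure is therefore dominated by a two-point race ($t$ confirmations of a worst-case runner-up before a single return of $x_1$), giving $(1-p_1)\bigl(p_2/(p_1+p_2)\bigr)^{t}\le 2^{-t}$ with no union bound and no constant slack. You instead union-bound over the identity of the winning competitor, which really does create the spurious factor of $n$ (or $\ln(1/p_1)$) you describe, and your dyadic bucketing --- each bucket-$j$ competitor multiplies the running sum by $1+\Omega(2^{-j})$ while the cap $p_i\le p_1$ confines such competitors to the range $S_{i-1}<2^{j+1}p_1$, so there are $O(j2^j)$ of them, contributing $O(j2^{j}\cdot 2^{-j\tau})$ --- correctly removes it; the counting, the summation $\sum_{j\ge1}j2^{-j(\tau-1)}=O(2^{-\tau})$, and your running-time argument (the halting time is pathwise at most the $\tau$-th return of $x_1$, hence $\E[j_q]\le\tau/p_1$) are all sound. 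What the paper's argument buys is brevity, an exact constant (so $t=\lceil\log_2(1/\delta)\rceil$ works for every $\delta$), and a lemma (Theorem~\ref{thm:cs}) that needs no assumption $p_2\le p_1$ at all.

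There is, however, one genuine gap relative to Theorem~\ref{thm:confirmationLSH} as stated: the data structures are only assumed \emph{independent}, not identically distributed, yet your proof fixes a single vector of probabilities $p_i=\Pr[\mathcal{D}\text{ returns }z_i]$. With per-query probabilities $p_i^{(l)}$, the ratio $p_i/S_i$ underlying both your per-competitor bound and your bucketing is no longer well defined, and the natural repair --- bounding competitor $z_i$ by $\sup_l p_i^{(l)}/S_i^{(l)}$ --- destroys the telescoping count: if query $l$ assigns probability $p_q$ to competitor $z_l$ and negligible probability to all other competitors, every competitor has sup-ratio close to $1/2$, and the union bound degrades back to $n2^{-\tau}$ even though the algorithm itself is still fine. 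The paper's current-candidate argument is immune to this, because it only uses the per-query bounds $\Pr[\mathcal{D}_l\text{ returns }\beta]\le p_q\le\Pr[\mathcal{D}_l\text{ returns }x_1]$ at each step; this is precisely the point of the remark following Theorem~\ref{thm:cs} that the sampling distribution may change between iterations. So your argument proves the theorem for identically distributed $\mathcal{D}_i$ (the main LSH use case) but not in the generality claimed.
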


Theorem~\ref{thm:confirmationLSH} shows that, at least in the case where we may use quadratic space to store a sufficiently long sequence of data structures $\mathcal{D}_i$, it suffices to focus on minimizing the product of the expected time for $\mathcal{D}_i$ and the number of iterations $1/p_1$.

In practice one would of course not have access to an unbounded sequence of data structures, but rather to a fixed number $L$ of data structures.
If these data structures offer a trade-off between query time and probability of returning the nearest neighbor it is still possible to apply Theorem~\ref{thm:confirmationLSH}:
For $i=1,2,\dots,\log n$ run confirmation sampling in rounds of $L$ steps with time budget $2^i$ for each data structure $\mathcal{D}_i$.
Terminate as soon as confirmation sampling returns a result --- by a union bound over the $\log n$ rounds the error probability is at most $\delta\log n$.

\medskip

Our second result addresses how to adapt not only to the collision probability of the nearest neighbor, but to the whole distance distribution from $q$ to points in $P$.
In particular, we design and analyze a new adaptive query algorithm for the LSH Forest data structure~\cite{charikar2002,bawa2005lshforest} discussed above.
LSH Forest is known to be able to adapt to the distance distribution to some extent, but previous work has required the query algorithm to depend on the distance to the nearest neighbor in $P$.
In contrast our query algorithm is independent of properties of the data.
The only requirement is that the LSH family used is \emph{monotone} in the sense that collision probability is non-increasing with distance.
We compare our adaptive algorithm to an optimal algorithm in a class of \emph{natural} algorithms that choose a level $i^*$ and a number of tries~$j^*$ (which may depend on the distance distribution between $q$ and $P$) and inspect the first $j^*$ buckets at level $i^*$.

\begin{theorem}\label{thm:adaptive}
    Let $OPT(L,K)$ denote the optimal cost of a natural algorithm that queries an LSH Forest data structure with $L$ trees and $K$ levels and returns the nearest neighbor with probability at least $1-1/n$.
    Further assume that the LSH family is monotone.
    Then there is an adaptive algorithm that queries an LSH Forest data structure with $O(L)$ trees and $K$ levels that returns the nearest neighbor using time $O(OPT(L,K))$ with probability $1-1/n$.
\end{theorem}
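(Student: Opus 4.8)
The plan is to reduce Theorem~\ref{thm:adaptive} to a per-level application of confirmation sampling (Theorem~\ref{thm:confirmationLSH}), and then wrap an outer search over levels around it. Fix a query $q$ with nearest neighbor $x_1$, write $p_1 = p(q,x_1)$, and for a level $i$ let $N_i = \E[|S_{i,j}(q)|] = \sum_{x\in P} p(q,x)^i$. First I would observe that \emph{monotonicity is exactly the hypothesis needed to invoke Theorem~\ref{thm:confirmationLSH} at a fixed level $i$}: let $\mathcal{D}^{(i)}_k$ be the data structure that descends the $k$-th tree to level $i$ and returns the point of $S_{i,k}(q)$ closest to $q$. Then $x_1$ is returned with probability exactly $p_1^i$ (if $x_1$ lies in the bucket it is automatically the closest point returned), while any other $x$ is returned with probability at most $p(q,x)^i \le p_1^i$, the inequality being precisely monotonicity. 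Hence Theorem~\ref{thm:confirmationLSH} with $p_q = p_1^i$ and $\delta$ small finds $x_1$ using $O(\ln(1/\delta)/p_1^i)$ expected tries, each costing $O(i + |S_{i,k}(q)|)$, for an expected cost of
\[
    C_i := \frac{\ln(1/\delta)}{p_1^i}\left(i + N_i\right).
\]
A routine calculation shows that the optimal natural algorithm restricted to level $i$ also has cost $\Theta\!\left(\tfrac{\ln n}{p_1^i}(i+N_i)\right)$ (its number of tries must satisfy $(1-p_1^i)^{j}\le 1/n$, forcing $j = \Theta(\ln n/p_1^i)$), so with $\delta = 1/\poly(n)$ confirmation sampling at the \emph{best} level matches $OPT(L,K) = \Theta(\min_i C_i)$ up to constants.

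The remaining difficulty is that the optimal level $i^*$ is unknown and depends on the data. Here I would exploit two structural facts. (i) The cost $C_i$ is \emph{unimodal} in $i$: writing $C_i = \ln(1/\delta)\big(i\,(1/p_1)^i + \sum_{x}(p(q,x)/p_1)^i\big)$ exhibits it as a sum of a convex increasing term and a convex decreasing term, hence convex in $i$. (ii) The buckets are \emph{nested}, $S_{i+1,k}(q)\subseteq S_{i,k}(q)$, so a single descent of tree $k$ to depth $d$ simultaneously reveals the level-$i$ bucket for \emph{every} $i\ge d$ at cost $O(d+|S_{d,k}(q)|)$. Fact~(ii) lets the \emph{same} pool of tree descents feed the confirmation-sampling processes of all levels at once, which is what keeps the total number of trees at $O(L)$: the successful level $i^*$ needs $O(\ln n/p_1^{i^*}) = O(L)$ distinct trees, shallower levels need fewer and reuse these descents, and deeper (more expensive) levels are cut off before they consume too many trees.

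To turn this into an algorithm I would run a \emph{budget-doubling} outer loop with total-cost budget $B = 1,2,4,\dots$. In the round for budget $B$ I run confirmation sampling at the level(s) selected by a convex search (a capped ternary/doubling search guided by the observed costs, using fact~(i)), aborting any level whose accumulated cost exceeds $B$; an abort is not an error, only evidence that $C_i > B$. As soon as some level's process confirms a candidate I output it and stop. When $B = \Theta(OPT)$ the level-$i^*$ process succeeds with probability $1-\delta$, so termination happens at cost $O(OPT)$; since the per-round cost is $O(B)$ and the budgets double, the geometric sum is dominated by the last round and the total time is $O(OPT)$. For recall, a union bound over the $R = O(\log n)$ rounds (and over the $O(1)$ levels probed per round) controls the probability of ever confirming a wrong point; choosing $\delta = 1/(n\log n)$ keeps $\ln(1/\delta) = O(\ln n)$ and yields overall recall $1-1/n$.

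The main obstacle is steps~(i) and~(ii) working \emph{together} to keep the level-search overhead a \emph{constant} factor rather than $\Theta(K)$ or $\Theta(\log K)$: naively running confirmation sampling at all $K$ levels would multiply the cost by $K$. The crux of the analysis is to show that the convex search probes only $O(1)$ levels per budget round while still reaching $i^*$ by the time $B = \Theta(OPT)$, and that nested-bucket reuse prevents the tree count from exceeding $O(L)$. I also expect the boundary case where $p_1^i$ is close to $1$ (so the $\Theta(\ln n/p_1^i)$ estimate for the number of tries degenerates) to require a separate, easy argument, and the precise definition of the ``natural'' cost $OPT(L,K)$---in particular whether the $i$ traversal term or only bucket sizes are charged---to need matching in the constants.
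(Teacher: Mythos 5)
Your reduction of a fixed level $i$ to confirmation sampling is sound (it is exactly the paper's Lemma~\ref{lem:cslsh}: monotonicity gives $\Pr[X=x] \le p(q,x)^i \le p_1^i$ for $x \ne x_1$, while $x_1$ is returned with probability $p_1^i$), and your convexity observation about $T(i) = (i + C(i))/p_1^i$ is correct. The genuine gap is the level-selection step, which you yourself flag as ``the crux'': you propose to locate the minimizer $i^*$ by a ternary/convex search ``guided by the observed costs'' under doubling budgets, but this cannot work as described. The function to be minimized is not observable: its denominator $p_1^i$ involves the unknown collision probability of the unknown nearest neighbor, and the only way to ``observe'' $T(i)$ is to run confirmation sampling at level $i$ to completion, paying $\Theta(T(i)\log n)$. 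Under a budget cap $B$ all you learn about a probed level is censored information (``its cost exceeded $B$''), and censored values at probe points of a convex function do not localize its minimum. So by the round with $B = \Theta(OPT)$ --- when the feasible set $\{i : T(i)\log n = O(B)\}$ may consist of a single level out of $K$ --- nothing learned in earlier rounds tells you which level to probe, and probing $O(1)$ levels per round cannot be guaranteed to hit it; any strategy that provably finds it seems to need $\Omega(\log K)$ or more probes per round, destroying the $O(OPT)$ bound. You also leave the tree budget unhandled: a deep level can exhaust all $O(L)$ available trees (at cost only $iL \le B$) without terminating, and your outline has no fallback for that event.

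The paper closes exactly this hole with a different idea: instead of chasing the unobservable minimizer $i^*$, it uses the \emph{balance point} $i'$, the smallest level with $C(i') \le i'$, whose empirical version \emph{is} observable from collision counts alone (line~\ref{line:level} of Algorithm~\ref{alg:ANN}, computable in $O(ij)$ time per forest from subtree sizes). Monotonicity gives $C(i+1) \le p_1 C(i)$, hence $i/p_1^i$ is non-decreasing and $C(i)/p_1^i$ is non-increasing in $i$, and a short case analysis shows that searching levels $i'$ and $i'-1$ \emph{in parallel} costs $O(T(i^*))$ whenever enough trees are available: if $i^* \ge i'$ then $T(i') \le 2\,i'/p_1^{i'} \le 2\,T(i^*)$, and if $i^* \le i'-1$ then $T(i'-1) \le 2\,C(i'-1)/p_1^{i'-1} \le 2\,T(i^*)$. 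A separate bottom-up phase (decreasing the level once $L'$ trees are exhausted) handles the case $i^* < i'-1$ where level $i'-1$ needs more trees than exist, with the wasted work again bounded via $C(i+1) \le p_1 C(i)$. High probability is then obtained by a quorum rule over $\Theta(\log n)$ independent forests of $L' = O(L/\log n)$ trees with $t=3$ confirmations, rather than your $t = \Theta(\log n)$ in a single forest (that part of your plan is a legitimate alternative). Without the balance-point lemma and the bottom-up fallback, your outline does not yield the theorem.
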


LSH Forest is not an asymptotically optimal data structure for approximate nearest neighbor search in general.
For example, it is known that data-dependent methods can be asymptotically faster in several important spaces, and data structures obtaining better space-time trade-offs are known~\cite{andoni2015optimal,andoni2017optimal}.
Generalizing our results for exact nearest neighbors to a data-dependent setting, say, in Euclidean space, is an interesting open direction.
Note that the data structures $\mathcal{D}_i$ in Theorem~\ref{thm:confirmationLSH} could be data dependent, though present data-dependent LSH techniques rely on knowing the (approximate) distance to the nearest neighbor.

%* Will consider a static data structure; can be made dynamic with standard black-box dynamization techniques for composable search problems (no asymptotic increase in query time above $n^\varepsilon$).

%-----------------------------------------------------------------------------
\subsection{Related work}\label{sec:related}
%-----------------------------------------------------------------------------

There is a large literature on using LSH for nearest neighbors search in practice, often generalized to the $k$-nearest neighbor problem where the $k$ closest points in $P$ must be returned.
For simplicity we concentrate on the case $k=1$, but most results extend to arbitrary $k$.
Many heuristics that work well in practice come without guarantees on either result quality or query time in high dimensions, or provides guarantees only under certain assumptions on the data set.

\paragraph{Guarantees on recall.}
In practice, the performance of locality-sensitive hashing techniques is usually measured by their recall: the fraction of the true $k$-nearest neighbors found on average, see e.g.~\cite{andoni2015practical,aumuller2017annbenchmarks}.
From a theoretical point of view it is natural to bound the \emph{expected recall}, i.e., the probability that the nearest neighbor is found.
We are only aware of very few works that provide theoretical guarantees on expected recall in conjunction with sublinear query time in high dimensions and without assumptions on data.

Dong et al.~\cite{dong2008modeling} outline an ``adaptive'' method for achieving a given expected recall in the context of multiprobe LSH (with no formal statement of guarantees).
The idea is to determine, after inspecting $i$ buckets, whether to terminate or to inspect bucket $i+1$ based on the collision probability $p(q,\hat{x}_1)$ between $q$ and the nearest neighbor $\hat{x}_1$ found in the first $i$ buckets.
This requires an efficient method for computing $p(q,\hat{x}_1)$, which might not be known, especially for small collision probabilities.
This is not just a theoretical problem: Prominent LSH methods such as $p$-stable LSH~\cite{datar2004} and cross-polytope LSH~\cite{andoni2015practical} do not have closed-form expressions for collision probabilities.
Our adaptive algorithm is similar in spirit, but entirely avoids having to compute collision probabilities.

For the related \emph{near neighbor} problem where a search radius $r$ is given it is easier to give guarantees on recall, especially when collision probabilities at distance $r$ can be computed, see e.g.~\cite{christiani2017scalable}.

\paragraph{Parameter tuning.}
Since the performance of LSH data structures depends on parameter choices, a lot of work has gone into devising ways of choosing good parameters for a given data set, both during data structure construction and adaptively for the query algorithm.
Slaney et al.~\cite{slaney2012optimal} propose to select parameters based on the ``distance profile'' of a data set, but needs a bound on the distance to the nearest neighbor to function.

The state-of-the-art FALCONN library~\cite{andoni2015practical} uses grid search over parameters to empirically estimate the best parameters, assuming that the data and query distributions are identical.

We note that the adaptive method of Dong et al.~\cite{dong2008modeling} does not adapt search depth to the distance distribution from the query point $q$. 
In fact, choosing good parameters for LSH and especially multi-probe LSH was mentioned by Lv et al.~\cite{lv2017intelligent} as a challenge in the paper celebrating their VLDB 10-year Best Paper Award.

% problem: in the literature query adaptive refers also to an improvement to multi-probing where closer buckets are probed first.
% Rasmus: I don't think this term is so well-established that we cannot use it in another setting

%-----------------------------------------------------------------------------
\section{Confirmation sampling}
%-----------------------------------------------------------------------------
Let $\mathcal{Q}$ denote a probability distribution with finite support $S$.
Further assume that elements of $S$ are equipped with a total ordering relation $\prec$, and define $x_1 = \min(S)$ as the smallest element in the support with respect to the ordering $\prec$.
Consider the problem of identifying $x_1$ given that we only have access to samples from the distribution $\mathcal{Q}$ and to the ordering, 
i.e., given elements $x,y\in S$ we can determine whether $x \prec y$, $x=y$, or $y \prec x$.
We propose a simple randomized algorithm for solving this problem that we call confirmation sampling.
The algorithm works by drawing samples from $\mathcal{Q}$ while keeping track of the smallest element seen so far together with the number of times it has been sampled in addition to the first sample --- the number of \emph{confirmations}. 
Once the smallest element has been confirmed $t$ times, the algorithm reports that element and terminates.
We use $\infty$ to denote an element that is larger than all elements of $S$.
\begin{algorithm}
\SetKwArray{Count}{count}
\DontPrintSemicolon
$\beta \leftarrow \infty$, \Count $\leftarrow 0$\;
\While{\Count $< t$}{
sample $X \sim \mathcal{Q}$\;
\uIf{$X = \beta$}
{\Count $\leftarrow$ \Count + 1\;\label{line:increase}}
\ElseIf{$X \prec \beta$}{
$\beta \leftarrow X$\;
\Count $\leftarrow 0$\;
}
}
\Return $\beta$\;
\caption{\textsc{ConfirmationSampling}$(\mathcal{Q}, t,\prec)$} \label{alg:confirmationsampling}
\end{algorithm}

\begin{theorem} \label{thm:cs}
Let $\mathcal{Q}$ denote a probability distribution with finite support $S$.
For $x_1 = \min(S)$ and $X\sim\mathcal{Q}$ let $p_1 = \Pr[X = x_1]$ and let $p_2 = \max \{ \Pr[X = x] \; | \; x\in S\backslash\{x_1\}\}$ be the largest sampling probability among elements of~$S$ other than $x_1$. Then:
      $$\Pr[\textsc{ConfirmationSampling}(\mathcal{Q}, t) \neq x_1] \leq (1-p_1) \left(\frac{ p_2 }{ p_1 + p_2 }\right)^{t} \enspace $$
The expected number of samples made by \textsc{ConfirmationSampling} is bounded by $(t+1)/p_1$.
\end{theorem}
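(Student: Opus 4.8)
The plan is to treat the two claims separately. For the success guarantee, I would first reduce failure to a clean event: since $x_1=\min(S)$, the first time a sample equals $x_1$ the variable $\beta$ is set to $x_1$ and can never change afterwards, so from that moment termination is certain to return $x_1$. Hence the algorithm fails if and only if it terminates \emph{before} $x_1$ is ever sampled. To quantify this I would follow the algorithm through its ``reset states'' $(\beta,\texttt{count})=(y,0)$ and let $G(y)$ be the probability of eventual failure conditioned on currently being in such a state. Writing $r_y=\Pr[X=y]$, $s_y=\Pr[X\preceq y]$, $d_y=\Pr[X\prec y]=s_y-r_y$, and $\rho_y=r_y/s_y$, and noting that samples $\succ y$ are simply ignored, a one-step analysis of the relevant samples (those $\preceq y$) yields
\[
G(y)=\rho_y^{t}+\frac{1-\rho_y^{t}}{d_y}\sum_{z\prec y} r_z\,G(z),
\qquad G(x_1)=0 ,
\]
where the first term is the probability of $t$ consecutive confirmations of $y$ and the sum accounts for a restart at some $z\prec y$; the boundary $G(x_1)=0$ holds because every relevant sample at $x_1$ equals $x_1$. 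Since the first sample enters state $(v,0)$ with probability $r_v$, the overall failure probability is $\sum_{v\ne x_1} r_v\,G(v)$.

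The central claim is $G(y)\le g:=(p_2/(p_1+p_2))^{t}$ for every $y$, which I would prove by induction on the rank of $y$ under $\prec$. The crucial move is to keep the $x_1$-term of the sum separate: since $G(x_1)=0$ and $x_1\prec y$, the inductive hypothesis gives $\sum_{z\prec y} r_z G(z)\le g\,(d_y-p_1)$. Substituting this and rearranging, the target $G(y)\le g$ reduces to
\[
\Psi(r_y,d_y)\le\Psi(p_2,p_1),\qquad\text{where }\Psi(r,d)=\frac{(r/(r+d))^{t}\,d}{1-(r/(r+d))^{t}} ,
\]
with $r_y\le p_2$ (by definition of $p_2$) and $d_y\ge p_1$ (since $x_1\prec y$ contributes $p_1$ to $d_y$). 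I expect this inequality to be the main obstacle: a naive union bound over the returned value loses a factor of $|S|-1$ and in fact overshoots the target, which is tight. The clean route around it is the closed form $\Psi(r,d)=r/\sum_{k=0}^{t-1}(1+d/r)^{k}$, from which monotonicity is immediate --- $\Psi$ is non-decreasing in $r$ and non-increasing in $d$ --- so the constraints $r_y\le p_2$ and $d_y\ge p_1$ force $\Psi(r_y,d_y)\le\Psi(p_2,p_1)$. This closes the induction, and summing $G(v)\le g$ weighted by $r_v$ over $v\ne x_1$ gives $\Pr[\text{failure}]\le(1-p_1)(p_2/(p_1+p_2))^{t}$.

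For the expected sample bound I would argue directly that the algorithm always halts no later than the $(t+1)$-st occurrence of $x_1$ in the stream: once $x_1$ is first drawn, $\beta=x_1$ permanently, each further draw of $x_1$ increments \texttt{count}, and no draw can reset it, so $t$ additional occurrences of $x_1$ force termination. The number of samples is therefore dominated by the index of the $(t+1)$-st success in independent trials each succeeding (equalling $x_1$) with probability $p_1$; this index is negative-binomially distributed with mean $(t+1)/p_1$, giving $\E[\text{samples}]\le (t+1)/p_1$.
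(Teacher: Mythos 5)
Your proof is correct, and for the failure-probability bound it takes a genuinely different route from the paper's. The paper argues directly: failure requires $t$ ``false confirmations'' (increments of the counter while $\beta\neq x_1$) before $x_1$ is ever sampled, and it bounds this by asserting that the worst case is when the incumbent is always the heaviest non-minimal element $x_2$, which reduces the event to a race --- $t$ samples of $x_2$ before one sample of $x_1$, among samples restricted to $\{x_1,x_2\}$ --- giving $\left(\frac{p_2}{p_1+p_2}\right)^t$, with the factor $(1-p_1)$ coming from the observation that the first sample can never be a confirmation. That ``maximized if $\beta=x_2$'' step is an informal exchange/domination argument; your recursion over reset states $(\beta,\texttt{count})=(y,0)$, with the exact first-step identity for $G(y)$, the separation of the $x_1$ term using $G(x_1)=0$, and the reduction to the monotonicity of $\Psi(r,d)=r/\sum_{k=0}^{t-1}(1+d/r)^k$ in $r$ and $d$, is precisely a rigorous version of that domination step (I checked the algebra: the induction target $G(y)\le g$ is indeed equivalent to $\Psi(r_y,d_y)\le\Psi(p_2,p_1)$, and the closed form for $\Psi$ is right). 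So your argument is longer and more computational, but it closes a gap the paper leaves to the reader's intuition, and it yields exactly the same bound; the paper's proof buys brevity and a transparent probabilistic picture. Your argument for the expected number of samples --- termination no later than the $(t+1)$-st occurrence of $x_1$, hence a negative-binomial bound of $(t+1)/p_1$ --- is identical to the paper's. One trifling remark common to both arguments: the statement implicitly assumes $t\ge 1$ (with $t=0$ the algorithm returns $\infty$ immediately), which your closed form for $\Psi$ also tacitly uses.
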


Before we show Theorem~\ref{thm:cs} we observe that it implies Theorem~\ref{thm:confirmationLSH}:
Define an ordering on $P$ by $x \preceq_q y \Longleftrightarrow \dist(q,x) \leq \dist(q,y)$.
It can be turned into a total ordering $\prec_q$ by an arbitrary but fixed tie-breaking rule.
Choose $t=\lceil\log_2(1/\delta)\rceil$ and run \textsc{ConfirmationSampling}$(\mathcal{Q}, t,\prec_q)$ with the $i$th sample from $\mathcal{Q}$ being produced by querying $\mathcal{D}_i$ for the nearest neighbor of $q$. 
Since $p_1\leq p_q$ and $p_2\geq p_q$ we have that the error probability is bounded by $2^{-t} \leq \delta$.

\begin{proof}
If the algorithm fails to report $x_1$ it must have happened at least $t$ times that the confirmation counter was incremented (line~\ref{line:increase}) due to a sample $X$ satisfying the condition $X = \beta$ for $\beta \neq x_1$.
We will refer to such events as \emph{false confirmations} and proceed by upper bounding the probability that the algorithm performs $t$ false confirmations. 
Prior to each sample the probability of performing a false confirmation is maximized if $\beta = x_2$ for some $x_2\ne x_1$ maximizing the sampling probability, i.e., $\Pr[X = x_2] = p_2$.
Note also that the first sample can never result in a false confirmation.
The probability of the algorithm performing $t$ false confirmations before sampling $x_1$ can therefore be upper bounded by the probability that the first sample is not equal to $x_1$ and that we in the following samples observe $t$ samples of $x_2$ before sampling $x_1$. 
The probability that we sample $x_2$ conditioned on sampling either $x_1$ or $x_2$ is exactly $\frac{ p_2 }{ p_1 + p_2 }$, and the probability of this happening $t$ times in a row is $\left(\frac{ p_2 }{ p_1 + p_2 }\right)^{t}$.

To analyze the number of samples, consider an infinite sequence of independent samples $X_1,X_2,\dots \sim \mathcal{Q}$, and suppose that in the $i$th iteration the algorithm uses sample $X_i$.
Observe that the algorithm terminates no later than iteration $i$ if $x_1$ is sampled $t+1$ times in $X_1,\dots,X_i$.
The expected number of iterations needed to sample $x_1$ $t+1$ times is exactly $(t+1)/p_1$.
\end{proof}

Theorem \ref{thm:cs} is tight in the case where $\Q$ only assigns non-zero probability to two elements.
In Appendix~\ref{sec:exact} we derive the exact distribution of the output of \textsc{ConfirmationSampling}.
We observe that for the proof to work, the distribution from which samples are drawn does not need to be the same in each iteration of \textsc{ConfirmationSampling}, as long as $p_1$ is a lower bound on sampling $x_1$ and $p_2$ is an upper bound on sampling each element other than $x_1$.
If for some $\gamma \in [0,1]$ we have that every distribution satisfies $p_2/(p_1 + p_2) \leq \gamma$ then we can upper bound the error probability by $\gamma^t$. 
%-----------------------------------------------------------------------------
\subsection{Application to locality-sensitive hashing}
%-----------------------------------------------------------------------------
Assume that we have an LSH family that is tuned to give few collisions between query and non-neighbor points for a given query and data distribution.
Such a ``tuned'' LSH family may be obtained if the query distribution is known as discussed in section~\ref{sec:related}.
We can use confirmation sampling to adjust query time according to the distance to the nearest neighbor.

Let $(V, \dist)$ denote a distance space. That is, $V$ is equipped with a distance function $\dist \colon V \times V \to \mathbb{R}$.
We define locality-sensitive hashing~\cite{indyk1998} as follows:
\begin{definition}
	Let $\LSH$ denote a distribution over functions $h \colon V \to R$. 
	We say that $\LSH$ is locality-sensitive over $(V, \dist)$ if there exists a non-increasing $f \colon \mathbb{R} \to [0,1]$ such that for all $x, y \in V$ we have that 
	\begin{equation*}
		\Pr_{h \sim \LSH}[h(x) = h(y)] = f(\dist(x, y)).
	\end{equation*}
\end{definition}

We use the ordering $\prec_q$ defined above and define a distribution $\mathcal{Q}_q$ that is most easily described as a sampling procedure.
For now we will not care about the efficiency of implementing the sampling.
To create a sample $X \sim \mathcal{Q}_q$, sample $h \sim \LSH$, compute the ``bucket'' 
$$S(q) = \{ x \in P \mid h(x) = h(q) \}\enspace .$$
Now define $X$ as the element of $S(q)$ closest to $q$, if such an element exists, and otherwise a random element in $P$.%
\footnote{The sampling of a random element ensures compatibility with \textsc{ConfirmationSampling}, which requires a sample to be returned even if there is no hash collision. It is not really necessary from an algorithmic viewpoint, but also does not hurt the asymptotic performance.}
More precisely: If $S(q)\ne\emptyset$ we pick $X$ as the unique minimum element in $S(q)$ according to the total order $\prec_q$, and if $S(q)=\emptyset$ we pick $X$ uniformly at random from $P$.

\begin{lemma} \label{lem:conditional}
    For $X \sim \mathcal{Q}_q$ and any $x_2 \in P$, $\Pr[X = x_1] \geq \Pr[X = x_2]$.
\end{lemma}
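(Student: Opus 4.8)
The plan is to compute $\Pr[X = x]$ by splitting the sampling procedure defining $\Q_q$ into its two cases: either the bucket $S(q)$ is non-empty and $X$ is its $\prec_q$-minimum, or $S(q)=\emptyset$ and $X$ is a uniformly random point of $P$. Writing $p(q,x) = \Pr_{h\sim\LSH}[h(x)=h(q)] = f(\dist(q,x))$ for the collision probability, I would first note that the empty-bucket contribution is $\Pr[S(q)=\emptyset]/n$ for \emph{every} $x\in P$, so this term is identical for $x_1$ and for a generic $x_2$ and cancels in the comparison. It therefore suffices to compare the ``bucket'' contributions, and we may decompose
\[
    \Pr[X = x_2] = \Pr[x_2 = \min\nolimits_{\prec_q} S(q)] + \tfrac{1}{n}\Pr[S(q)=\emptyset] .
\]

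The key step is to avoid reasoning about the joint law of the collision events $\{x_j \in S(q)\}$, which are generally dependent, and instead use a one-sided bound. For a generic $x_2 \in P$, the event that $x_2$ is the $\prec_q$-minimum of $S(q)$ implies in particular that $x_2 \in S(q)$, so
\[
    \Pr[x_2 = \min\nolimits_{\prec_q} S(q)] \leq \Pr[x_2 \in S(q)] = p(q, x_2) .
\]
For $x_1$ this same inequality is in fact an equality: since $x_1 = \min_{\prec_q}(P)$ is globally $\prec_q$-smallest, whenever $x_1 \in S(q)$ it is automatically the $\prec_q$-minimum of $S(q)$, so the bucket contribution of $x_1$ equals exactly $\Pr[x_1\in S(q)] = p(q,x_1)$, with no slack lost.

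Finally I would invoke monotonicity of the LSH family. Because $x_1$ is a nearest neighbor of $q$ we have $\dist(q,x_1)\le\dist(q,x_2)$, and as $f$ is non-increasing this gives $p(q,x_2)=f(\dist(q,x_2))\le f(\dist(q,x_1))=p(q,x_1)$. Chaining these relations yields
\[
    \Pr[X = x_2] \leq p(q,x_2) + \tfrac{1}{n}\Pr[S(q)=\emptyset] \leq p(q,x_1) + \tfrac{1}{n}\Pr[S(q)=\emptyset] = \Pr[X = x_1],
\]
which is the claim (the case $x_2=x_1$ being a trivial equality). There is no serious obstacle here; the only pitfall is the temptation to expand $\Pr[X=x_2]$ as a joint event over all $\prec_q$-smaller points, and the resolution is simply to drop the ``no smaller point collides'' conjunct, which gives exactly $p(q,x_2)$, while for $x_1$ that conjunct is vacuous so the bound is tight.
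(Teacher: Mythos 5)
Your proof is correct and takes essentially the same route as the paper's: the empty-bucket term $\Pr[S(q)=\emptyset]/n$ is common to both points, and monotonicity of $f$ gives $p(q,x_2)\leq p(q,x_1)$. In fact your write-up is slightly more careful than the paper's own proof, which asserts $\Pr[X=x_2] = \Pr[h(q)=h(x_2)] + \Pr[S(q)=\emptyset]/n$ as an equality --- that identity holds only for $x_1$ (whose membership in $S(q)$ automatically makes it the $\prec_q$-minimum), whereas for a general $x_2$ it is only the upper bound you make explicit by dropping the ``no smaller point collides'' conjunct, so your handling of that step is the correct one.
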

\begin{proof}
    Since $\LSH$ is locality-sensitive we have that $\Pr[h(q)=h(x_1)] \geq \Pr[h(q)=h(x_2)]$.
    Thus
    \begin{align*}
        \Pr[X=x_1] &= \Pr[h(q)=h(x_1)] + \frac{\Pr[S(q)=\emptyset]}{n} \\
        &\geq \Pr[h(q)=h(x_2)] + \frac{\Pr[S(q)=\emptyset]}{n} \\
        &= \Pr[X=x_2] \enspace .
    \end{align*}
\end{proof}

As before Theorem~\ref{thm:cs} now implies that confirmation sampling succeeds with good probability:

\begin{lemma} \label{lem:cslsh}
    Let $x_1$ be the nearest neighbor of $q$ in $P$ (breaking any ties according to $\prec_q$).
    \textsc{ConfirmationSampling}$(\mathcal{Q}_q,t,\prec_q)$ returns $x_1$ with probability least $1 - 2^{-t}$.
    The expected number of samples from $\mathcal{Q}_q$ is bounded by $(t+1)/p_1$, where $p_1 \geq \Pr[h(q)=h(x_1)]$.
\end{lemma}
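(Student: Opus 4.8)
The plan is to derive the lemma directly from Theorem~\ref{thm:cs}, instantiating it with the distribution $\mathcal{Q}_q$ and the total order $\prec_q$. The first thing I would verify is that the quantities line up: since $\prec_q$ orders points by distance to $q$, the minimum element of the support of $\mathcal{Q}_q$ under $\prec_q$ is precisely the nearest neighbor $x_1$, so the ``$x_1$'' of Theorem~\ref{thm:cs} coincides with the nearest neighbor here. Writing $p_1 = \Pr_{X\sim\mathcal{Q}_q}[X=x_1]$ and $p_2 = \max\{\Pr[X=x] \mid x\in P\setminus\{x_1\}\}$, Theorem~\ref{thm:cs} then immediately supplies the failure bound $(1-p_1)\bigl(p_2/(p_1+p_2)\bigr)^{t}$ together with the expected-sample bound $(t+1)/p_1$.

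For the correctness claim I would invoke Lemma~\ref{lem:conditional}, which states $\Pr[X=x_1]\ge\Pr[X=x]$ for every $x\in P$; in particular $p_1\ge p_2$. Substituting into the ratio and using that $p_2\mapsto p_2/(p_1+p_2)$ is increasing yields $p_2/(p_1+p_2)\le p_1/(p_1+p_1)=1/2$, and since $1-p_1\le 1$ the failure probability is at most $2^{-t}$. Hence $x_1$ is returned with probability at least $1-2^{-t}$, as claimed.

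For the running time, Theorem~\ref{thm:cs} already bounds the expected number of samples by $(t+1)/p_1$ with $p_1=\Pr[X=x_1]$, so it only remains to relate this $p_1$ to the collision probability of the nearest neighbor. Reusing the identity established in the proof of Lemma~\ref{lem:conditional}, namely $\Pr[X=x_1]=\Pr[h(q)=h(x_1)]+\Pr[S(q)=\emptyset]/n$, I obtain $p_1\ge\Pr[h(q)=h(x_1)]$, which is exactly the stated relation.

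Since each of these steps is essentially a one-line substitution, I do not expect a genuine technical obstacle; the only point requiring care is bookkeeping. Specifically, I must confirm that the minimum of the support under $\prec_q$ really is the nearest neighbor, and keep clear that the generic symbols $p_1,p_2$ in Theorem~\ref{thm:cs} denote \emph{sampling} probabilities under $\mathcal{Q}_q$ rather than collision probabilities, so that the final inequality $p_1\ge\Pr[h(q)=h(x_1)]$ appears as an additional consequence (via Lemma~\ref{lem:conditional}) and not as a definition.
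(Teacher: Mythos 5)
Your proposal is correct and follows essentially the same route as the paper, which derives Lemma~\ref{lem:cslsh} by applying Theorem~\ref{thm:cs} to $\mathcal{Q}_q$ and invoking Lemma~\ref{lem:conditional} to get $p_2 \leq p_1$, so that $(1-p_1)\bigl(p_2/(p_1+p_2)\bigr)^t \leq 2^{-t}$, with the bound $p_1 = \Pr[h(q)=h(x_1)] + \Pr[S(q)=\emptyset]/n \geq \Pr[h(q)=h(x_1)]$ handling the sample-complexity claim. Your bookkeeping points (that $\min$ of the support under $\prec_q$ is indeed $x_1$, and that the $p_1, p_2$ of Theorem~\ref{thm:cs} are sampling rather than collision probabilities) are exactly the right things to check and are consistent with the paper's treatment.
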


To efficiently sample from $\mathcal{Q}_q$ we independently sample $h_1, h_2,\dots \sim \LSH$, and construct a sequence of hash tables $\mathcal{D}_1,\mathcal{D}_2,\dots$ that allow us to find $S_i(q) = \{ x \in P \mid h_i(x) = h_i(q) \}$ in time $O(1+|S_i(q)|)$.
Random samples from $P$ can be realized using an array of pointers to elements of~$P$.

We note that the above is not an entirely satisfactory solution, since the number of data structures needed cannot be bounded ahead of time (or rather, $\Omega(n)$ data structures may be needed to succeed, resulting in quadratic space usage).
A possible remedy if the algorithm does not terminate after inspecting $L$ hash tables is \emph{multi-probing}~\cite{panigrahy2006,lv2017intelligent} where more than one bucket is inspected in each hash table.
Multiprobing increases the probability $p_1$ of finding the nearest neighbor in each hash table.
In the next section we consider another approach to dealing with a space-bounded data structure.
%-----------------------------------------------------------------------------
\section{Fully adaptive nearest neighbor search}
%-----------------------------------------------------------------------------
We present an adaptive algorithm for nearest neighbor search in an LSH Forest that succeeds with high probability\footnote{For every choice of constant $c \geq 1$ there exists a constant $n_0$ such that for $n \geq n_0$ we can obtain success probability $1 - 1/n^c$ where $n = |P|$ denotes the size of the set of data points.} and matches the minimum expected running time that can be obtained by a natural algorithm that has full knowledge of the LSH collision probabilities between the query point and all the data points, provided we are are allowed a constant factor increase in the number of trees used by the algorithm.
We define $OPT(L, K)$ as the minimum expected search time that can be achieved by an algorithm with access to an LSH Forest of $L$ trees of depth $K$ where the algorithm can choose to search $j \leq L$ trees at level $i \leq K$ with the requirement that the nearest neighbor should be reported with probability at least $1 - 1/n$.
\begin{equation*}\label{eq:opt}
    OPT(L, K) = \min \{ (\ln n)  (i + \sum_{x \in P} p(q, x)^i)/p(q, x_1)^i \mid 0 \leq i \leq K,\, p(q, x_1)^i L \geq \ln n \}
\end{equation*}
We note that $OPT(L, K)$ only reflects the optimal running time under the assumption that $p_1$ is bounded away from $1$. 
If for example we had $p_1 = 1$ the multiplicative overhead of $\ln n$ in the running time would not be needed. 

\paragraph{Overview of our approach.}
The algorithm works by measuring the number of collisions at different levels in the LSH Forest and with high probability adapting to search at a level that will result in $O(OPT(L, K))$ running time.
Ideally, given sufficiently many trees, we would like to search the level $i$ that balances the number of hash function evaluations and the expected number of collisions with the query point.
However such a level might not exist as the expected number of collisions can decrease by more than a constant factor as we increase the level.

We begin by introducing some notation. Let $p_1 = p(q, x_1)$, where $x_1$ denotes the nearest neighbor to $q$ in $P$ and define:
\begin{equation*}
	C(i) = \sum_{x \in P} p(q, x)^{i}, 
\end{equation*}
\begin{equation*}
	T(i) = (i + C(i))/p_{1}^{i}. 
\end{equation*}
Observe that $C(i)$ is the expected number of collisions with the query point at level $i$, and $T(i)$ is the expected running time of an algorithm that searches at level $i$ and guarantees reporting the nearest neighbor of $q$ with some constant probability.
If we let $i^*$ denote the choice of level resulting in the minimum value of $OPT(L, K)$ then $OPT(L, K) = T(i^*) \ln n$. 
Finally, define $i'$ to be the smallest integer $i$ such that $C(i) \leq i$. 

Given that the number of trees $L$ is sufficiently large we can show that searching either the first $1/p_{1}^{i'}$ trees at level $i'$ or the first $1/p_{1}^{i' - 1}$ trees at level $i'- 1$ results in an expected running time that is bounded by $O(T(i^*))$ while we report the nearest neighbor with constant probability at least $1 - 1/e$.
That is, one of the two levels right around where the number of hash function evaluations and the number of collisions balance out (we have $C(i') \leq i'$ and $C(i'-1) > i' - 1$) result in optimal running time for constant failure probability. 
Since we don't know $p_1$ we can search \emph{both} of these levels using confirmation sampling, in parallel, until one of them terminates. 
This gives us an algorithm that with constant probability terminates in time $O(T(i^*))$ and reports the nearest neighbor.
In order to reduce the failure probability to $1/n$ while obtaining optimal running time in the high probability regime we can perform $O(\log n)$ independent repetitions, so that conceptually there are $O(\log n)$ independent forests, and stop the search once a constant fraction terminates. 

\paragraph{Query algorithm and parameters.}
There are two circumstances that prevent us from being able to use the approach outlined above.
The primary problem is that we don't know the value of $i'$ and estimating it appears to be difficult. 
The solution proposed by our algorithm is to instead search the ``empirical'' $i'$ and $i'-1$:
we measure the number of collisions at different levels and search level $i$ and $i-1$ where $i$ is set to the minimum level where the average number of collisions is smaller than $i$.
This procedure is described in pseudocode in the for-loop section of Algorithm~\ref{alg:ANN}.

The second problem is that restrictions on $L$ and $K$ can make it necessary for us to search a level $i < i' -1$, either because $K < i' - 1$ or because $L$ is too small to ensure that we find the nearest neighbor by searching at level $i' - 1$. 
The second part of Algorithm~\ref{alg:ANN} that runs when $j = L'$ deals with this problem by searching through the LSH forests bottom-up until a level that results in optimal running time is encountered.

\begin{algorithm}
\DontPrintSemicolon
\For{$j \gets 1,2,4,\dots, L'$}{
    find the smallest level $i$ such that the first $j$ trees in at least half of the forests have at most $10ij$ collisions. If such a level does not exist set $i = K$.\;\label{line:level}
    in each forest run confirmation sampling at level $i$ and $i-1$ with a time budget of $10ij$ (looking at no more than $j$ buckets and at no more than $10ij$ collisions).\;
	\If{confirmation sampling terminated in $1/4$ of the forests at level $i$ or level $i-1$} {
	   report the closest point seen so far and terminate.
	}
	\If{$j = L'$} {
	   run confirmation sampling in lock-step across the forests starting at level $i-1$, decreasing the level and starting over once $1/2$ of the searches have explored tree number $L'$. Do this until confirmation sampling terminates in $1/4$ of the forests.
	}
}
\caption{\textsc{AdaptiveNearestNeighbor}$(q)$} \label{alg:ANN}
\end{algorithm}

We aim for matching the running time of $OPT(L, K)$ up to constant factors when we are allowed to use $O(L)$ trees.
Algorithm \ref{alg:ANN} operates on $\Theta(\log n)$ LSH Forests that each has $L'$ trees where $L' = O(L / \log n)$ is a sufficiently large power of two. 
The confirmation sampling used to search in these forests has a parameter setting of $t = 3$ since we only need each search to terminate and correctly report the nearest neighbor with a sufficiently large constant probability. 

The proof of Theorem \ref{thm:adaptive} is based on two arguments.
First we will show that the stopping condition that $1/4$ of the forests at a given level terminates within the time budget ensures that the nearest neighbor is always reported with high probability.
Second, we show that with high probability the algorithm terminates in time $O(OPT(L, K))$. 

\paragraph{Correctness.}
The choice of $i$ made by the algorithm always satisfies $i \leq n$ since there can be no more than $n$ collisions at any level.
If we show correctness with high probability at a fixed level then we can use a simple union bound over the first $n$ levels to show that with high probability at every level where $1/4$ of the searches terminate we have found the nearest neighbor of the query point.
The instances of confirmation sampling used by Algorithm \ref{alg:ANN} use $t = 3$ confirmations before terminating. 
According to Lemma \ref{lem:cslsh} the probability of terminating and reporting a point different from the nearest neighbor is at most $1/8$. 
By applying a standard Chernoff bound we can show that over $O(\log n)$ independent runs of confirmation sampling with high probability less than $1/4$ the instances will fail to report the nearest neighbor. 

\paragraph{Bounding the running time.}
We remind the reader that we use $i^{*}$ to denote the underlying choice of level that minimizes $OPT(L, K)$, that $i'$ denotes the minimum level such that $C(i') \leq i'$, and that $i$ is the choice of level made by the query algorithm.

Consider line \ref{line:level} of Algorithm~\ref{alg:ANN} where the level $i$ is set to the smallest level where the first $j$ trees in at least half the forest have at most $10ij$ collisions. 
This operation can be completed in $O(ij)$ time per forest by proceeding top-down across all the forests and for each forest summing up the number of collisions across all its tries at the current level until level $i$ is reached. 
We make use of constant-time access to the size of buckets/subtrees as we search down in an LSH Forest trie (either by explicitly storing the size of subtrees when we construct the trie, or by inspecting the pointers to the bucket associated with a given prefix).

We will now argue that with high probability Algorithm~\ref{alg:ANN} terminates in time $O(OPT(L, K))$ in each of the two following cases:
%
%\begin{enumerate}
%    \item $C(i^{*}) \leq i^{*}$.
%    \item $C(i^{*}) > i^{*}$.
%\end{enumerate}
%
\paragraph{Case 1: $C(i^{*}) \leq i^{*}$.}
We will show that there exists a value of $j \leq L'$ such that with high probability the algorithm terminates at this value (or earlier) and in $O(OPT(K, L))$ time.
Consider the first iteration of the for-loop where $100/p_1^{i^{*}} \leq j \leq L'$.
Such a $j$ exists by the restrictions underlying the choice of level that minimizes $OPT(L, K)$ and by our freedom to set $L' = O(L / \log n)$.
By Markov's inequality the probability that the number of collisions in the first $j$ trees of a forest at level $i'$ is greater than $10i'j$ is at most $1/10$.
Therefore it happens with high probability that the algorithm sets $i \leq i' \leq i^{*}$ where the last inequality follows from the definition of $i'$ and the assumption that $C(i^{*}) \leq i^{*}$.
By our choice of $j$ we know that confirmation sampling at level $i$ will terminate in each forest with a large constant probability, say, $9/10$. 
With high probability we therefore have that in at least $1/4$ of the forests confirmation sampling at level $i$ terminates within the budget of $10ij$.
To bound the total running time we use that with high probability $i \leq i'$ for every value of $j$ and since $j$ is doubled at every step of the for loop we can bound the running time in all $O(\log n)$ LSH forests by $O(i'j \log n) = O(T(i^{*}) \log n) = O(OPT(L, K))$.

\paragraph{Case 2: $C(i^{*}) > i^{*}$.}
Consider first the sub-case where $i^{*} = i' - 1$.
Suppose there exists a minimum $j \leq L'$ such that $i'j \geq 100\, T(i' -1)$, $j$ is an integer power of 2, and $j \geq 100/p_1^{i'-1}$ (the latter condition holds by the assumption $i^* = i' - 1$). 
We previously argued that with high probability the algorithm sets $i \leq i'$.
In the first iteration of the for-loop where $j$ takes on this value the following holds:
If $i = i'$ then level $i'-1$ is searched with a sufficiently large budget to ensure termination with high probability.
If $i < i'$ then level $i'-1$ is searched up until tree number $j$, again ensuring termination with high probability. 
In both of these cases the running time is bounded by $O(OPT(L, K))$.
Otherwise, if $L'i' < T(i' - 1) / 100$ then with high probability the time spent in the for-loop part of the algorithm is upper bounded by $O(T(i' - 1) \log n) = O(OPT(L, K))$, and if level $i' - 1$ was not searched in the for-loop then it will be searched in the first step of the bottom-up part of the algorithm (because $i \leq i'$ with high probability) where we are guaranteed to terminate in optimal time with high probability.

Consider now the sub-case where $i^* < i' - 1$. 
Let $\hat{i}$ denote the largest level satisfying $\hat{i} < i' - 1$ and $100/p_1^{\hat{i}} \leq L'$.
The query algorithm will terminate with high probability when having searched sufficiently many trees at level $\hat{i} \geq i^*$.
We will proceed by bounding the cost up to the point where $100/p_1^{\hat{i}}$ trees have been searched in half of the forests at level $\hat{i}$.
The cost of running the for-loop part of the algorithm is bounded by $O(L'i' \log n)$ with high probability.
The number of collisions encountered through the bottom-up search when having searched level $\hat{i} + 1$ is with high probability bounded by $O(C(\hat{i} + 1)L' \log n ) = O((C(\hat{i} + 1)/p_1^{\hat{i} + 1}) \log n)$ since $100/p_1^{\hat{i} + 1} > L'$ by our choice of $\hat{i}$.
Finally, the cost of searching at level $\hat{i}$ until $1/4$ of the forests terminate is bounded by $O(T(\hat{i}) \log n)$ with high probability.

Next we show that the sum of all these costs is bounded by $O(OPT(L, K))$.
For every $x \in P$ it holds by monotonicity that $p_1 = p(x_1, q) \geq p(x, q)$ and it follows that for every $i$ we have $C(i+1) \leq p_1 C(i)$.
Applying this inequality we get the bound $C(\hat{i} + 1)/p_1^{\hat{i} + 1} \leq C(i^*)/p_1^{i^*} \leq T(i^*)$ that is used to bound the number of collisions from the bottom-up search.
The same approach also gives a bound on the number of collisions at level $\hat{i}$.
In order to bound the contribution from the for-loop note that $C(\hat{i} + 1) \geq C(i' -1) > i' - 1$ where the last inequality holds by the definition of $i'$.
It also holds that $L' < 100/p_1^{\hat{i} + 1}$ by the choice of $\hat{i}$.
Combining these two inequalities  $L'i' \leq 100\,(C(\hat{i} + 1) + 1)/p_1^{\hat{i} + 1} = O(T(i^*))$.
The bound on the total running time is then given by $O(T(i^*) \log n)) = O(OPT(L, K))$.
%-----------------------------------------------------------------------------
\section{Conclusion and open problems}
%-----------------------------------------------------------------------------
We have introduced confirmation sampling as a technique for identifying the minimum element from a discrete distribution.
Confirmation sampling works particularly well when the minimum element is at least as likely to be sampled as other elements.
Combining confirmation sampling with locality-sensitive hashing we obtain a randomized solution to the exact nearest neighbor search problem that works without knowledge of the probability of collision between pairs of points.
We use these techniques to design a new adaptive query algorithm for the LSH Forest data structure with $L$ trees that returns the nearest neighbor of a query point with the same time bound that is achieved if the query algorithm has access to an LSH forest of $\Omega(L)$ trees with internal parameters specifically tuned to the query and data. 

We can use confirmation sampling with LSH to solve the $k$-nearest neighbor problem with high probability in $k$ by keeping track of the top-$k$ closest points and requiring each to be confirmed $O(\log k)$ times.
If we are able to compute the collision probabilities we can use the adaptive stopping rule of Dong et al.~\cite{dong2008modeling} to stop the search once we have sampled $j \geq \ln(1/\delta)/\hat{p}_k$ buckets, where $\hat{p}_k$ is the collision probability between the query point and the $k$th nearest neighbor candidate found by the query algorithm. 
This stopping rule guarantees that if $x$ is a $k$-nearest neighbor to the query point, and the LSH family is monotone, then $x$ is reported with probability at least $1-\delta$.
It would be interesting to find a similarly efficient stopping rule for $\delta = \Theta(1)$ that works without knowledge of the collision probabilities.  

Our adaptive query algorithm for the LSH Forest data structure makes use of union bounds over the $K$ levels of the data structure when showing correctness and also uses that with high probability it doesn't search too far (something which could potentially cost time $O(n)$). When we compare our performance against an optimally tuned algorithm that must succeed with high probability we can afford to pay for this extra overhead. It remains an open problem to find an adaptive query algorithm that matches an optimally tuned algorithm when we only require constant success probability, even if we can compute collision probabilities

\newpage

\appendix
%-----------------------------------------------------------------------------
\section{Exact distribution of the output of \textsc{ConfirmationSampling}}\label{sec:exact}
%-----------------------------------------------------------------------------
Suppose $S=\{x_1,\dots,x_n\}$, where indices are chosen such that $p_i = \Pr[X=x_i]$ is non-decreasing in $i$: $p_1\geq p_2 \geq \dots \geq p_n$.
Given a distribution $\Q$ and a parameter $t$ let $\varrho_i$ denote the probability that \textsc{ConfirmationSampling}$(\Q, t)$ reports element $x_i$.
Consider an infinite sequence of i.i.d.\ samples $X_1, X_2, \dots$ from $\Q$.
If $x_i$ is sampled $t + 1$ times before a single sample of $x_j$ with $j < i$ then the algorithm reports $x_i$.
It is easy to see that $\varrho_n = p_n^{t+1}$ since the only way that $x_n$ gets reported is if the first $t+1$ samples $X_1, \dots, X_{t+1}$ are equal to $x_n$.
We can extend this idea to obtain the expression for $\varrho_i$.
\begin{lemma} \label{lem:exact}
	\begin{equation*}
		\varrho_i = \left(1 - \sum_{j > i}\varrho_j\right)\left(\frac{p_i}{\sum_{s \leq i} p_s}\right)^{t+1}.
	\end{equation*} 
\end{lemma}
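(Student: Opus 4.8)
The plan is to factor $\varrho_i$ into two independent contributions: the probability that the algorithm does \emph{not} report an element of index larger than $i$, and the conditional probability that, given this, it reports exactly $x_i$. I would first record that the algorithm terminates almost surely: the running minimum $\beta$ can strictly decrease in $\prec$ only finitely often (at most $n$ times), so it eventually stabilizes at the smallest index that is ever sampled, and that element is then resampled $t$ further times almost surely. Hence $\varrho_1,\dots,\varrho_n$ form a probability distribution, and $1-\sum_{j>i}\varrho_j=\sum_{j\le i}\varrho_j$ is precisely the probability that the reported element has index at most $i$.

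The key structural observation concerns the first time $\tau$ at which a sample of index at most $i$ occurs, i.e.\ $X_\tau\in\{x_1,\dots,x_i\}$. Because $\beta$ only decreases in $\prec$, once an element of index at most $i$ is sampled the value of $\beta$ has index at most $i$ forever after; therefore the algorithm reports an index $>i$ if and only if it has already terminated strictly before time $\tau$. Writing $E$ for the event that the algorithm has not terminated before $\tau$, this identification gives $\Pr[E]=1-\sum_{j>i}\varrho_j$. Moreover, at time $\tau$ the sample $X_\tau$ is strictly $\prec$-smaller than the current $\beta$, so $\beta$ is reset to $X_\tau$ with its confirmation counter set to $0$; from that point on, every sample of index $>i$ is $\prec$-larger than $\beta$ and thus neither increments the counter nor changes $\beta$. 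Consequently, on $E$ the reported element is determined entirely by the subsequence of samples of index at most $i$, and it equals $x_i$ exactly when the first $t+1$ of these samples are all $x_i$: since $x_i$ is the $\prec$-largest of $x_1,\dots,x_i$, any appearance of some $x_j$ with $j<i$ before the counter reaches $t$ would drive $\beta$ below $x_i$ permanently.

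To finish, I would invoke the standard thinning property of i.i.d.\ sequences: the subsequence of samples landing in $\{x_1,\dots,x_i\}$ is i.i.d.\ with the conditional law $\Pr[X=x_s\mid X\in\{x_1,\dots,x_i\}]=p_s/\sum_{r\le i}p_r$, and it is independent of both the arrival pattern (which positions fall in the set) and the values of the samples of index $>i$. Since $E$ is a function only of these last two objects, $E$ is independent of the index-at-most-$i$ subsequence, so conditioning on $E$ leaves its distribution unchanged. The conditional probability that its first $t+1$ terms are all $x_i$ is therefore $\left(p_i/\sum_{s\le i}p_s\right)^{t+1}$, and multiplying by $\Pr[E]=1-\sum_{j>i}\varrho_j$ yields the stated recursion; the base case $\varrho_n=p_n^{t+1}$ is the instance $i=n$, where $\Pr[E]=1$ and $\sum_{s\le n}p_s=1$. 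The main obstacle is the clean justification of this last independence — that the stopping event $E$ and the post-$\tau$ evolution of $(\beta,\text{counter})$ depend on disjoint, independent pieces of the sample sequence — together with verifying that the reset of $\beta$ at time $\tau$ erases all dependence on the pre-$\tau$ history.
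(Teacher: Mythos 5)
Your proof is correct and takes essentially the same route as the paper's: the paper's ``gradually reveal information'' argument is exactly the thinning/splitting of the i.i.d.\ sample sequence that you formalize, with the subsequence of samples of index at most $i$ (whose conditional law yields the factor $\left(p_i/\sum_{s \leq i} p_s\right)^{t+1}$) decoupled from the remaining samples, the paper merely organizing this as a recursive peeling from $x_n$ downward rather than a one-shot split at threshold $i$. Your version just supplies the independence and stopping-time justifications that the paper leaves implicit.
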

\begin{proof}
	We will gradually reveal information about the outcomes of the sequence $X_1, X_2, \dots$ in order to arrive at the expression in the Lemma.
	We begin by asking the question for each $X_{i}$ whether $X_i = x_n$ or $X_i < x_n$.
	Only if the first $t+1$ samples are equal to $x_n$ does the algorithm report $x_n$.
	Otherwise we can restrict our attention to the elements $X_i < x_n$ and ask the same question for $x_{n-1}$ and so on.
\end{proof}
For a specific choice of distribution we can compare the exact probability that confirmation sampling fails to report the minimum element with our upper bound in Theorem \ref{thm:cs}. From inspection: if we consider the uniform distribution the failure probability appears identical for $t = 1$ and as we increase $t$ the upper bound is at most twice as large as the actual failure probability.
%-----------------------------------------------------------------------------
\bibliography{knn}
\bibliographystyle{abbrv}
%-----------------------------------------------------------------------------
\end{document}